\newcounter{ALC@tempcntr}% Temporary counter for storage
\theoremstyle{plain}% default
\newtheorem{theorem}{Theorem}%[section]
\newtheorem{proposition}{Proposition}%[section]
\theoremstyle{definition}
\theoremstyle{remark}
\newtheorem{remark}{Remark}%[section]
\newcommand{\beq}{\begin{eqnarray}}
\newcommand{\eeq}{\end{eqnarray}}
\newcommand{\field}[1]{\mathbb{#1}}
\newcommand{\F}{\field{F}}
\newfont{\bbb}{msbm10 scaled 500}
\newfont{\bb}{msbm10 scaled 1100}
\newcommand{\FF}{\mbox{\bb F}}
\newcommand{\cv}{{\bf c}}
\newcommand{\xv}{{\bf x}}
\newcommand{\yv}{{\bf y}}
\newcommand{\Em}{{\bf E}}
\newcommand{\Hm}{{\bf H}}
\newcommand{\Id}{{\bf I}}
\newcommand{\Jm}{{\bf J}}
\newcommand{\Cc}{{\cal C}}
\newcommand{\Mc}{{\cal M}}
\newcommand{\Sc}{{\cal S}}
\newcommand{\remove}[1]{}
\definecolor{OXO-emph}{RGB}{153,0,0}
\DeclareMathAlphabet{\mathpzc}{OT1}{pzc}{m}{it}
\theoremstyle{definition}
\theoremstyle{remark}
\newcommand{\latexe}{{\LaTeX\kern.125em2%
                      \lower.5ex\hbox{$\varepsilon$}}}
\chardef\bslash=`\\	
\def\square{\RIfM@\bgroup\else$\bgroup\aftergroup$\fi
\vcenter{\hrule\hbox{\vrule\@height.6em\kern.6em\vrule}
\hrule}\egroup}\makeatother\makeindex
\begin{document}
\sloppy

\title{Progress on High-rate MSR Codes: Enabling Arbitrary Number of Helper Nodes}
%\title{MSR codes and centralized repair of multiple node failures}

 \author{
\IEEEauthorblockN{Ankit Singh Rawat}
\IEEEauthorblockA{CS Department\\
     Carnegie Mellon University\\
     Pittsburgh, PA 15213\\
     Email:~asrawat@andrew.cmu.edu.}
\and
\IEEEauthorblockN{O. Ozan Koyluoglu}
   \IEEEauthorblockA{Department of ECE\\
     The University of Arizona\\
     Tucson, AZ 85721\\
     Email:~ozan@email.arizona.edu.}
\and
\IEEEauthorblockN{Sriram Vishwanath}
   \IEEEauthorblockA{Department of ECE\\
     The University of Texas at Austin\\
     Austin, TX 78712\\
     Email:~sriram@austin.utexas.edu.}
 }

\interdisplaylinepenalty=2500 % allow pagebreaks in multiline eqs

\allowdisplaybreaks

\maketitle

%%%%%%%%%%%%%%%%%%%%%%%%%%%%%%%%%%%%%%%%%%%%%%%%%%%%%%%%
% End of Title and Author details
%%%%%%%%%%%%%%%%%%%%%%%%%%%%%%%%%%%%%%%%%%%%%%

%%%%%%%%%%%%%%%%%%%%%%%%%%%%%%%%%%%%%%%%%%%%%%%%%%%%%%%%
% Abstract
%%%%%%%%%%%%%%%%%%%%%%%%%%%%%%%%%%%%%%%%%%%%%%

\begin{abstract}
This paper presents a construction for high-rate MDS codes that enable bandwidth-efficient repair of a single node. Such MDS codes are also referred to as the minimum storage regenerating (MSR) codes in the distributed storage literature. The construction presented in this paper generates MSR codes for all possible number of helper nodes $d$ as $d$ is a design parameter in the construction. Furthermore, the obtained MSR codes have polynomial sub-packetization (a.k.a. node size) $\alpha$. The construction is built on the recent code proposed by Sasidharan et al.~\cite{SAK15}, which works only for $d = n-1$, i.e., where all the remaining nodes serve as the helper nodes for the bandwidth-efficient repair of a single node. The results of this paper broaden the set of parameters
where the constructions of MSR codes were known earlier.
\end{abstract}

\begin{IEEEkeywords}
Codes for distributed storage, regenerating codes, minimum storage regenerating (MSR) codes, sub-packetization.
\end{IEEEkeywords}

%%%%%%%%%%%%%%%%%%%%%%%%%%%%%%%%%%%%%%%%%%%%%%%%%%%%%%%%
% End of Abstract
%%%%%%%%%%%%%%%%%%%%%%%%%%%%%%%%%%%%%%%%%%%%%%

%%%%%%%%%%%%%%%%%%%%%%%%%%%%%%%%%%%%%%%%%%%%%%%%%%%%%%%%
% Introduction
%%%%%%%%%%%%%%%%%%%%%%%%%%%%%%%%%%%%%%%%%%%%%%

\section{Introduction}

Consider a distributed storage system with $n$ storage nodes which stores a file of size $\Mc$ symbols over a finite field. The distributed storage system (DSS) is referred to be an $(n, k)$-DSS if it has {\em `any $k$ out of $n$' property}, i.e., the content of any $k$ out of $n$ storage nodes is sufficient to reconstruct the entire file. In \cite{dimakis}, Dimakis et al. explore the issue of node repair in an $(n, k)$-DSS. In particular, they study $(n, k)$-DSS which allow for the repair of a single failed node by contacting $d$ out of $n-1$ remaining storage nodes and downloading $\beta$ symbols from each of these $d$ helper nodes. Assuming that each node in the system stores $\alpha$ symbols (over the finite field), Dimakis et al. obtain a trade-off between the node size $\alpha$ and repair bandwidth $\gamma = d\beta$, the amount of data downloaded during the repair process. The codes that attain this trade-off are referred to as {\em regenerating codes}. The two extreme points of this trade-off correspond to the minimum possible storage and the minimum possible repair-bandwidth for an $(n, k)$-DSS. These two points are termed as {\em minimum storage regenerating (MSR)} point and {\em minimum bandwidth regenerating (MBR)} point, respectively. The MSR point corresponds to
\begin{align}
\left(\alpha_{\rm MSR}, \beta_{\rm MSR}\right) &= \left(\frac{\Mc}{k}, \frac{d}{d - k + 1}\frac{\Mc}{k}\right). \nonumber
\end{align}
The MBR point is defined by
\begin{align}
\left(\alpha_{\rm MBR}, \beta_{\rm MBR}\right) &= \left(\frac{2d}{2d - k + 1}\frac{\Mc}{k}, \frac{2}{(2d - k + 1)}\frac{\Mc}{k}\right). \nonumber
\end{align}
The codes achieving the MSR and the MBR points are referred to as {\em minimum storage regenerating (MSR) codes} and {\em minimum bandwidth regenerating (MBR) codes}, respectively. Note that the MSR codes are also maximum-distance separable (MDS) codes~\cite{MacSlo}.

In \cite{dimakis}, Dimakis et al. also show the existence of the codes that achieve every point on the $\alpha$ vs. $d\beta$ trade-off for all possible system parameters $n, k, d$ to ensure {\em functional repair}. Under the functional repair, the content of the repaired node may differ from that of the failed node. However, the repaired node does ensure the `any $k$ out of $n$' property of the system. Sometimes, due to various system level requirements, it is desirable to construct regenerating codes that ensure {\em exact repair} of the failed node, i.e., the content of the repaired node is the same as the content of the failed node. In \cite{RSK11}, Rashmi et al. settle the problem of designing exact repairable MBR codes ({\em exact-MBR codes})  as they propose an explicit construction of such codes for all possible system parameters $n, k$ and $d$. 

On the other hand, the problem of constructing the exact-MSR codes has not been fully understood yet. The exact-MSR codes with $k < 3$ and $k \leq \frac{n}{2}$ are presented in \cite{WuDim09} and \cite{SRKR_itw10, SuhRam_isit10}, respectively. In \cite{RSK11}, Rashmi et al. present explicit constructions for exact-MSR codes with $2k - 2 \leq d \leq  n - 1$. In general, all of these constructions correspond to exact-MSR codes of low rate with $\frac{k}{n} \leq \frac{1}{2} + \frac{1}{2n}$. In \cite{CJMRS13}, Cadambe et al. show the existence of high-rate exact MSR codes when node size $\alpha$ (also referred to as {\em sub-packetization level}) approaches to infinity. Towards constructing high-rate exact-MSR codes with finite sub-packetization level, Papailiopoulos et al. utilize Hadamard matrices to construct exact-MSR codes with $n - k = 2$ and $d = n -1$ in \cite{PapDimCad_hadamard}. Using permutation-matrices exact-MSR codes for all $(n, k)$ pairs with $d = n - 1$ which only ensure repair bandwidth-efficient repair of systematic nodes are presented in \cite{cadambe2011optimal} and \cite{zigzag13}. In \cite{zigzag_allerton11}, Wang et al. generalize these constructions to enable repair of all nodes with $d = n-1$ helper nodes. However, we note that the sub-packetization level $\alpha$ of the constructions presented in \cite{zigzag13, zigzag_allerton11, PapDimCad_hadamard, cadambe2011optimal} is exponential in $k$. 

Recently, Sasidharan et al. have presented a construction of a constant  (high) rate MSR codes with polynomial sub-packetization in \cite{SAK15}. This construction enables repair of all the nodes in the system and works for $d = n-1$, i.e., all the remaining $n-1$ nodes has to be contacted to repair a single failed node. The construction with polynomial sub-packetization and enabling repair of only systematic nodes are also presented in \cite{WTB12, Cadambe_poly}. As for the converse results, Goparaju~et al. establish a lower bound on the sub-packetization level of an MSR code with given $n$ and $k$ in \cite{GTC14}.

In this paper, we present a construction for exact-MSR codes that allow for any given number of helper nodes, i.e., $k \leq d \leq n -1$.  In addition to working for an arbitrary (but fixed) $d$, our construction possesses the desirable properties of having polynomial sub-packetization level for a constant rate and enabling repair-bandwidth efficient repair of all the nodes in the system. We obtain this construction by suitably modifying the construction of Sasidharan et al.~\cite{SAK15}. The rest of the paper is organized as follows. We introduce the notation and necessary background in Section~\ref{sec:prelims}. In Section~\ref{sec:MSR_construction}, we present our code construction. In Section~\ref{sec:repair}, we describe the node repair process for the proposed code construction. We establish the MDS property (a.k.a. `any k out of n' property) for the construction in Section~\ref{sec:MDSprop}. We conclude the paper in Section~\ref{sec:conclusion}.

\section{Preliminaries}
\label{sec:prelims}

Let $\mathbbm{1}_{\{\cdot\}}$ denote the standard indicator function which takes the value $1$ if the condition stated in $\{\cdot\}$ is true and takes the value $0$ otherwise. For two $n\alpha$-length vectors $\xv$ and $\yv$, we defined the Hamming distance between them as follows. 
$$d_{\rm H}(\xv, \yv) = \sum_{i = 1}^{n}\mathbbm{1}_{\{\xv_i \neq \yv_i \}},$$
where for $i \in [n]$, we have $\xv_i = (x_{(i-1)\alpha + 1},\ldots,x_{i\alpha})$ and $\yv_i = (y_{(i-1)\alpha + 1},\ldots,y_{i\alpha})$. We say that a set of vectors $\Cc \subseteq \F^{n\alpha}_{Q}$ is an $(n, M, d_{\min}, \alpha)_Q$ vector code if we have $|\Cc| = M$ and $d_{\min} = \min_{\xv, \yv \in \Cc}d_{\rm H}(\xv, \yv)$.  Given a codeword $\cv = (c_1, c_2,\ldots, c_{n\alpha}) \in \Cc$, we use $\cv_i = (c_{(i - 1)\alpha + 1}, c_{(i - 1)\alpha + 2},\ldots, c_{i\alpha})$ to denote the $i$-th vector (code) symbol in the codeword. When the code $\Cc$ spans a linear subspace of dimension $\log_Q{M}$, we call $\Cc$ to be a linear vector code and refer to it as an $[n, \log_Q{M}, d_{\min}, \alpha]_{Q}$ vector code. Note that an $[n, k\alpha, d_{\min}, \alpha]_Q$ vector code can be defined by a parity-check matrix
$$
\Hm = \left( \begin{array}{cccc}
H_{1,1} & H_{1, 2} & \cdots & H_{1,n}\\
H_{2,1} & H_{2, 2} & \cdots & H_{2,n}\\
\vdots & \vdots & \ddots & \vdots \\
H_{n-k,1} & H_{n-k, 2} & \cdots & H_{n-k,n}\\
\end{array} \right) \in \F_Q^{(n - k)\alpha \times n\alpha},
$$
where each $H_{i, j}$ is an $\alpha \times \alpha$ matrix with its entries belonging to $\F_Q$. For a set $\Sc = \{i_1, i_2,\ldots, i_{|\Sc|}\} \subseteq [n]$, we define the $(n - k)\alpha \times |\Sc|\alpha$ matrix $\Hm(:,\Sc)$ as follows. 
$$
\Hm(:,\Sc) = \left( \begin{array}{cccc}
H_{1, i_1} & H_{1, i_2} & \cdots & H_{1, i_{|\Sc|}}\\
H_{2, i_1} & H_{2, i_2} & \cdots & H_{2, i_{|\Sc|}}\\
\vdots & \vdots & \ddots & \vdots \\
H_{n - k, i_1} & H_{n - k, i_2} & \cdots & H_{n- k, i_{|\Sc|}}
\end{array} \right).% \in \F_Q^{(n - k)\alpha \times |\Sc|\alpha}.
$$
Note that the matrix $\Hm(:, \Sc)$ comprises those coefficients in the linear constraints defined by the parity-check matrix $\Hm$ that are associated with the vector code symbols indexed by the set $\Sc$.

\section{Code Construction}
\label{sec:MSR_construction}

In what follows, we use $\Sigma$ to represent a linear combination whose coefficients are not specified explicitly. For example, for $a_1, a_2,\ldots, a_r \in \FF_Q$, $\sum_{i = 1}^ra_i$ denotes a linear combination of these $r$ elements where unspecified coefficients of the linear combination belong to $\FF_Q$. For an integer $q > 0$, we use $[q]$ and $[0:q-1]$ to denote the sets $\{1, 2,\ldots, q\}$ and $\{0, 1,\ldots, q-1\}$, respectively.

Assume that $n = (t - 1)(d - k + 1) + s$, for $t > 1$ and $0 \leq s \leq d-k$. We take 
\begin{align}
\alpha = \begin{cases} (d - k + 1)^{t-1} = q^{t-1} & \mbox{if } s = 0  \\
(d - k + 1)^t = q^t & \mbox{otherwise}.
\end{cases}
\end{align}
%We restrict ourselves to the setting where $(d - k + 1)|n$, i.e., $(d-k+1)$ divides $n$. Assuming that $t = \frac{n}{d - k + 1}$, we have $\alpha = (d - k + 1)^t = q^t$.  
Note that we use $q$ to denote $d - k + 1$. Moreover, as compared to \cite{SAK15}, we describe the construction for the wider range of parameters which corresponds to $s >  0$. Therefore, for $s>0$, we have $\alpha = (d - k + 1)^t = q^t$. For these values of parameters, at the MSR point, a node repair step involves downloading $$\beta = \frac{\alpha}{d - k + 1} = (d - k + 1)^{t-1} =q^{t-1}$$ symbols from each of the $d$ contacted nodes. Let $n = (t-1)q + s$ nodes be indexed by tuples 
\begin{align}
\label{eq:node_set}
\mathcal{N} &= \left\{(i, \theta):~(i, \theta) \in [t-1]\times [0:q-1]\right\}~\cup \nonumber \\
&~~~~~\left\{(t, \theta):~\theta \in [0:s-1]\right\}.
\end{align}
Note that each node in the system stores $\alpha = q^t$ code symbols. Let  $\{c((x_1, x_2,\ldots, x_t); (i, \theta)) \}_{(x_1,\ldots, x_t) \in [0:q-1]^t}$ represent the $q^t$ code symbols stored on the $(i, \theta)$-th node. In order to specify the MSR code $\Cc$, we specify $(n-k)\alpha = (n-k)q^t$ linear constraints over $\FF_Q$ that each codeword in $\Cc$ has to satisfy. We partition these $(n-k)\alpha$ constraints into two types of constraints which we refer to as ${\rm Type~I}$ and ${\rm Type~II}$ constraints, respectively.

\textbf{{\rm Type~I} constraints:} For each $(x_1,\ldots, x_t) \in [0:q-1]^t$, we have $n - d$ constraints of the following form.
\begin{align}
\label{eq:type1}
&\sum_{\theta \in [0:q-1]}c((x_1,\ldots, x_t); (1, \theta)) +  \nonumber \\
&\sum_{\theta \in [0:q-1]}c((x_1,\ldots, x_t); (2, \theta))  + \cdots + \nonumber \\
&\sum_{\theta \in [0:q-1]}c((x_1,\ldots, x_t); (t-1, \theta)) + \nonumber \\
&\sum_{\theta \in [0:s-1]}c((x_1,\ldots, x_t); (t, \theta))   = 0.
\end{align}
The coefficients of these constraints are chosen in such a way that the following holds for each $(x_1, x_2,\ldots, x_t) \in [0:q-1]^t$. Given any subset of $d$ code symbols out of $n$ code symbols  $\{c((x_1, x_2,\ldots, x_t);(i,\theta)\}_{(i, \theta)) \in \mathcal{N}}$, the remaining  $n-d$ code symbols can be recovered using these ${\rm Type~I}$ constraints.

\textbf{{\rm Type~II} constraints:} We now described the remaining $(n - k)\alpha - (n - d)\alpha = (d - k)\alpha = (d - k)q^{t}$ constraints satisfied by the codewords. For every $(x_1, x_2,\ldots, x_t) \in [0:q-1]^t$ and $\Delta \in [1:q-1]$, we have
\begin{align}
\label{eq:type2}
&c((x_1 - \Delta, x_2,\ldots, x_t); (1, x_1)) + \nonumber \\
&c((x_1, x_2 - \Delta,\ldots, x_t); (2, x_2)) + \cdots + \nonumber \\
&c((x_1,\ldots, x_{t-1} - \Delta, x_t); (t-1, x_{t-1})) + \nonumber \\
&\underline{c((x_1, x_2,\ldots, x_t - \Delta); (t, x_t))} +  \nonumber \\
&\sum_{\theta \in [0:q-1]}c((x_1,\ldots, x_t); (1, \theta)) + \nonumber \\
& \sum_{\theta \in [0:q-1]}c((x_1,\ldots, x_t); (2, \theta)) + \cdots + \nonumber \\
&\sum_{\theta \in [0:q-1]}c((x_1,\ldots, x_t); (t-1, \theta)) + \nonumber \\
& \sum_{\theta \in [0:s-1]}c((x_1,\ldots, x_t); (t, \theta))  = 0.
\end{align}
Here, the computation $x_i  - \Delta$, for $i \in [t]$, is performed modulo $q$. Furthermore, the underlined code symbols $c((x_1, x_2,\ldots, x_t - \Delta); (t, x_t))$ correspond to $0$ for $x_t \geq s$ as there is no node which is indexed by the tuple $(t, x_t)$ with $x_t \geq s$.

\begin{remark}
\label{rem:diff1}
One key difference from the construction in \cite{SAK15} is that for each tuple $(x_1,\ldots, x_t) \in [0:q-1]^t$, we generate $n - d$ {\rm Type~I} constraints. In \cite{SAK15}, only $1$ such constraint was generated as the case of $d  = n - 1$ was considered. The coefficients of these constraints need to be carefully chosen to ensure the requirements specified after \eqref{eq:type1}. We address this issue in Remark~\ref{rem:diff2}.
\end{remark}

\section{Recovering a failed node}
\label{sec:repair}

Assume that the node indexed by the tuple $(i, \theta_0)$ fails. We now describe the repair process of the failed node. The repair process can be viewed to have two stages. In the first stage, we use the {\rm Type~I} constraints to recover $\beta = \frac{\alpha}{d - k + 1} = q^{t - 1}$ out of $\alpha = q^t$ code symbols that are lost due to the node failure. Towards this, from each of the $d$ contacted nodes, we download the code symbols indexed by the tuples $\{(x_1,\ldots,x_{i-1}, \theta_0, x_{i+1},\ldots, x_t)\}$, where $(x_1,\ldots,x_{i-1}, x_{i+1},\ldots, x_t)$ span over all values in  $[0:q-1]^{t-1}$ from each of the $d$ contacted nodes. These symbols along with the {\rm Type~I} constraints (cf.~\eqref{eq:type1}) allow us to recover the symbols 
\begin{align}
\label{eq:stage1}
c((x_1,\ldots,x_{i-1}, \theta_0, x_{i+1},\ldots, x_t);(i,\theta)),
\end{align}
for every $(x_1,\ldots,x_{i-1}, x_{i+1},\ldots, x_t) \in [0:q-1]^{t-1}$ and $(i,\theta) \in \mathcal{N}$. 

It is clear from \eqref{eq:stage1} that after the first stage we have access to $\beta = q^{t-1}$ code symbols stored on the failed node as well as the $\beta$ symbols stored on the remaining $n-1$ nodes. In the second stage, we employ the {\rm Type~II} constraints (cf.~\eqref{eq:type2}) to recover the remaining $(d- k)\beta = (q-1)q^{t-1}$ symbols stored on the failed node, i.e., the node indexed by the tuple $(i, \theta_0)$. Recall that for a tuple $(x_1,\ldots, x_{i-1}, \theta_0, x_{i+1},\ldots, x_t) \in [0:q-1]^t$ and a non-zero integer $\Delta \in [q-1]$, the corresponding {\rm Type~II} constraint is as follows:
\begin{align}
\label{eq:stage2}
&c((x_1 - \Delta,\ldots,x_{i-1}, \theta_0, x_{i+1},\ldots, x_t); (1, x_1))+ \cdots + \nonumber \\
&c((x_1,\ldots, x_{i-1} - \Delta, \theta_0, x_{i+1},\ldots x_t); (i-1, x_{i-1})) + \nonumber \\
& \underline{c((x_1,\ldots, x_{i-1}, \theta_0 - \Delta, x_{i+1},\ldots, x_t); (i, \theta_0))} + \nonumber \\
&c((x_1,\ldots, x_{i-1},\theta_0, x_{i+1} - \Delta,\ldots, x_t); (i+1, x_{i+1})) + \cdots + \nonumber \\
&c((x_1,\ldots, x_{i-1}, \theta_0, x_{i+1},\ldots, x_t -\Delta); (t, x_t)) + \nonumber \\
& \Big( \sum_{\theta \in [0:q-1]}c((x_1,\ldots, x_{i-1}, \theta_0, x_{i+1},\ldots, x_t); (1, \theta)) + \nonumber \\
&\sum_{\theta \in [0:q-1]}c((x_1,\ldots, x_{i-1}, \theta_0, x_{i+1},\ldots, x_t); (2, \theta))  +\cdots +  \nonumber \\
&\sum_{\theta \in [0:q-1]}c((x_1,\ldots, x_{i-1}, \theta_0, x_{i+1},\ldots, x_t); (t, \theta)) \Big)= 0.
\end{align}
Note that except the underlined code symbol we know every other code symbol involved in \eqref{eq:stage2} (cf.~\eqref{eq:stage1}). Therefore, using the constraints in \eqref{eq:stage2}, we can complete the second stage of the repair process which recovers the remaining $(q-1)q^{t-1}$ code symbols from the failed node.

\section{MDS property of the code}
\label{sec:MDSprop}

In this section, we prove that it is possible to obtain the codes from the construction described in Section~\ref{sec:MSR_construction} that are maximum-distance separable (MDS). In particular, we argue that if the coding coefficients in the construction are selected from a finite field of large enough size, then there exists a choice for coding coefficients which lead to the obtained code being an MDS code. (We note that the argument presented in this section follows very closely to the argument used in \cite{SAK15}.)% Therefore, we only highlight the key steps of the argument and ask the reader to refer \cite{SAK15} for the details.)

Recall that for %an $(n, k, d, \alpha)$-MSR 
a code $\Cc$ defined in Section~\ref{sec:MSR_construction}, we can represent a codeword in the code $\Cc$ by an $n\alpha$-length vector in $\F_Q^{n\alpha}$. In particular, let $\cv = (\cv_1, \cv_2,\ldots, \cv_n)$ be a generic codeword from the code $\Cc$. Here, for each $j \in [n]$, $\cv_j \in \F_Q^{\alpha}$ represent the code symbols stored on the $i$-th node in the system. Assuming that the node indexed by the tuple $(i, \theta)$ represents the $\left((i - 1)q + \theta + 1\right)$-th node in the system, we have $\cv_{(i - 1)q + \theta + 1} = \{c((x_1, x_2,\ldots, x_t); (i, \theta)) \}_{(x_1,\ldots, x_t) \in [0:q-1]^t}$.

Let $\Hm \in \F_Q^{(n - k)\alpha \times n\alpha}$ be the parity check matrix of the code $\Cc$ defined by the {\rm Type~I} and {\rm Type~II} linear constraints presented in \eqref{eq:type1} and \eqref{eq:type2}, respectively. Note that it follows from the code construction that the parity check matrix $\Hm$ has the following structure. 
\begin{align}
\Hm = \left(\begin{array}{c}
\Hm^{\rm I} \\ \hline
\Hm^{\rm II} \end{array} \right).
\end{align}
Here, $\Hm^{\rm I}$ is an $(n - d)\alpha \times n\alpha$ matrix over $\F_Q$ which is defined by the $(n - d)\alpha = (n - d)q^t$ ${\rm Type~I}$ constraints (cf.~\eqref{eq:type1}). On the other hand, the $(d - k)\alpha = (d - k)q^t$ ${\rm Type~II}$ constraints (cf.~\eqref{eq:type2}) constitute the $(d - k)\alpha \times n\alpha$ matrix $\Hm^{\rm II}$ over $\F_Q$. We now focus on the structure of the two matrices $\Hm^{\rm I}$ and $\Hm^{\rm II}$. Note that 
\begin{align}
\Hm^{\rm I}  = \left(\begin{array}{c}
H^{\rm I}_1 \\
H^{\rm I}_2 \\
\vdots \\
H^{\rm I}_{n - d}
\end{array}\right)
\end{align}
where, for $i \in [n - d]$, the matrix $H^{\rm I}_i \in \F_Q^{\alpha \times n\alpha}$ is obtained by taking one of the $n - d$ ${\rm Type~I}$ constraints associated with each of the $q^t$ values for the tuple $(x_1, x_2,\ldots, x_t) \in [0: q-1]^t$ (cf.~\eqref{eq:type1}).  Similarly, we have 
\begin{align}
\Hm^{\rm II}  = \left(\begin{array}{c}
H^{\rm II}_1 \\
H^{\rm II}_2 \\
\vdots \\
H^{\rm II}_{d - k}
\end{array}\right),
\end{align}
where, for $i \in [d - k]$, the matrix $H^{\rm II}_i \in \F_Q^{\alpha \times n\alpha}$ is defined by one of the $d - k$ ${\rm~Type~II}$ constraints corresponding to each of the $q^t$ values for the tuple $(x_1, x_2,\ldots, x_t) \in [0: q-1]^t$ (cf.~\eqref{eq:type2}). Recall that for a tuple $(x_1, x_2,\ldots, x_t) \in [0: q-1]^t$, the $d - k$ ${\rm~Type~II}$ constraints corresponding to the tuple are associated with the $d - k$ values of the parameter $\Delta \in [1:q-1]$ (cf.~\eqref{eq:type2}). Exploring the structure of the parity check matrix further, we note that for every $i \in [n - d]$, the $\alpha \times n\alpha$ matrix $H^{\rm I}_i$ is a block matrix consisting of $n$ blocks where each blocks is an $\alpha \times \alpha$ diagonal matrix over $\F_Q$. In particular, let's denote it as 
\begin{align}
H^{\rm I}_i = \left( \begin{array}{c|c|c|c}J^{\rm I}_i(1) & J^{\rm I}_i(2)& \ldots & J^{\rm I}_i(n) \end{array} \right),
\end{align}
where $J^{\rm I}_i(j) \in \F_Q^{\alpha \times \alpha}$ is a diagonal matrix with all of its diagonal entries being non-zero. On the other hand, for $i \in [d - k]$, the $\alpha \times n\alpha$ matrix $H^{\rm II}_i$ is also a block matrix which can be written in the following form. \begin{align}
H^{\rm II}_i =  \left( \begin{array}{c|c|c|c}H^{\rm II}_i(1) & H^{\rm II}_i(2)& \ldots & H^{\rm II}_i(n) \end{array} \right),
\end{align}
where $H^{\rm II}_i(j) = J^{\rm II}_i(j) + E^{\rm II}_i(j) \in \F_Q^{\alpha \times \alpha}$. In this sum, the matrix $J^{\rm II}_i(j) \in \F_Q^{\alpha \times \alpha}$ is a diagonal matrix with all of its diagonal entries being non-zero. On the other hand, the second matrix in the sum $E^{\rm II}_i(j) \in \F_Q^{\alpha \times \alpha}$ has at most $1$ non-zero element in each of its row. In particular, for every $i \in [d-k]$, the block matrix
\begin{align}
\left( \begin{array}{c|c|c|c}E^{\rm II}_i(1) & E^{\rm II}_i(2)& \ldots & E^{\rm II}_i(n) \end{array} \right)
\end{align}
has exactly $t$ non-zero elements in each of its row. Here, we note that the matrix $E^{\rm II}_i$ contains coefficients of the following part of those $\alpha = q^t$ {\rm Type~II} constraints which correspond to a fixed value of the parameter $\Delta \in [d - k]$ (cf.~\eqref{eq:type2}). 
\begin{align}
\label{eq:type2_Epart}
&c((x_1 - \Delta, x_2,\ldots, x_t); (1, x_1)) + \cdots + \nonumber \\
&c((x_1,\ldots, x_{t-1} - \Delta, x_t); (t-1, x_{t-1})) + \nonumber \\ 
&c((x_1, x_2,\ldots, x_t - \Delta); (t, x_t)).
\end{align}
With all the components of the parity check matrix defined, we can represent the parity check matrix $\Hm$ as sum of two $(n - k)\alpha \times n\alpha$ matrix as follows. 
\begin{align}
\label{eq:HJE}
\Hm =  \left(\begin{array}{c}
\Hm^{\rm I} \\ \hline
\Hm^{\rm II} \end{array} \right) =  \Jm + \Em.
\end{align}
Here $\Jm \in \F_Q^{(n - k)\alpha \times n\alpha}$ and $\Em \in \F_Q^{(n - k)\alpha \times n\alpha}$ denote the following matrices.
\begin{align}
\label{eq:Jdef}
\Jm = \left(\begin{array}{c|c|c|c}
J^{\rm I}_1(1) & J^{\rm I}_1(2)& \cdots& J^{\rm I}_1(n) \\
J^{\rm I}_2(1) & J^{\rm I}_2(2)&  \cdots & J^{\rm I}_2(n) \\
\vdots & \vdots &  \ddots & \vdots \\
J^{\rm I}_{n - d}(1) & J^{\rm I}_{n - d}(2)& \cdots & J^{\rm I}_{n - d}(n) \\
\hline
J^{\rm II}_1(1) & J^{\rm II}_1(2)&  \cdots & J^{\rm II}_1(n) \\
J^{\rm II}_2(1) & J^{\rm II}_2(2)&  \cdots & J^{\rm II}_2(n) \\
\vdots & \vdots &  \ddots & \vdots \\
J^{\rm II}_{d - k}(1) & J^{\rm II}_{d - k}(2)& \cdots & J^{\rm II}_{d - k}(n) \\
\end{array} \right), \\% \in \F_Q^{(n - k)\alpha \times n\alpha}. \\
\Em = \left(\begin{array}{c|c|c|c}
\mathbf{0}_{\alpha} & \mathbf{0}_{\alpha} &\cdots& \mathbf{0}_{\alpha} \\
\mathbf{0}_{\alpha} & \mathbf{0}_{\alpha} & \cdots & \mathbf{0}_{\alpha} \\
\vdots & \vdots & \ddots & \vdots \\
\mathbf{0}_{\alpha} & \mathbf{0}_{\alpha} & \cdots & \mathbf{0}_{\alpha} \\
\hline
E^{\rm II}_1(1) & E^{\rm II}_1(2)&   \cdots & E^{\rm II}_1(n) \\
E^{\rm II}_2(1) & E^{\rm II}_2(2)&  \cdots & E^{\rm II}_2(n) \\
\vdots & \vdots & \ddots & \vdots \\
E^{\rm II}_{d - k}(1) & E^{\rm II}_{d - k}(2) &  \cdots & E^{\rm II}_{d - k}(n) \\
\end{array} \right).\label{eq:Edef} %\in \F_Q^{(n - k)\alpha \times n\alpha}. 
\end{align}
Note that, we use $\mathbf{0}_{\alpha}$ to represent the $\alpha \times \alpha$ all zero matrix. We now specify the non-zero entries in both the matrices $\Jm$ and $\Em$. Let $H_{MDS}$ be an $(n-k) \times n$ Cauchy matrix, 
\begin{align}
H_{MDS} = \left( \begin{array}{cccc} 
\frac{1}{a_1 - b_1}&\frac{1}{a_1 - b_2}&\cdots& \frac{1}{a_1 - b_n} \\
\frac{1}{a_2 - b_1}&\frac{1}{a_2 - b_2}&\cdots& \frac{1}{a_2 - b_n} \\
\vdots & \vdots & \ddots & \vdots \\
\frac{1}{a_{n - k} - b_1} & \frac{1}{a_{n-k} - b_2} &\cdots& \frac{1}{a_{n - k} - b_n} \end{array} \right), %\in \F_Q^{(n - k) \times n},
\end{align}
where $\{a_1, a_2,\ldots, a_{n - k}, b_1, b_2,\ldots, b_n \}$ are $2n - k$ distinct elements from the field $\F_Q$. Assuming that $\mathbf{I}_{\alpha}$ denotes the $\alpha \times \alpha$ identity matrix, we define the matrix $\Jm$ (cf.~\eqref{eq:Jdef}) as follows. 
\begin{align}
\Jm = H_{MDS} \otimes \mathbf{I}_{\alpha},
\end{align}
where $\otimes$ denotes the Kronecker product between two matrices. As for non-zero elements in the matrix $\Em$, we set all of its non-zero elements to be an indeterminate $\rho \in \F^{\ast}_Q$. In order to make it more clear, we denote the obtained matrix as $\Em^{\rho}$ and accordingly the parity-check matrix defined in \eqref{eq:HJE} becomes
\begin{align}
\label{eq:HJM1}
\Hm = \Jm + \Em^{\rho} = H_{MDS}\otimes \Id_{\alpha} + \Em^{\rho}.
\end{align} 
Next, we show that for large enough $Q$, there exists a choice for $\rho$ which makes the code defined by the parity check matrix $\Hm$ an MDS code. However, before showing this, we argue that our choice of the matrix $\Jm$ meets the requirement for the ${\rm Type~I}$ constraints. This requirement states that for every $(x_1,\ldots, x_t) \in [0:q-1]^t$, given any subset of $d$ code symbols out of $n$ code symbols  $\{c((x_1, x_2,\ldots, x_t);(i,\theta))\}_{(i, \theta) \in \mathcal{N}}$, the remaining  $n-d$ code symbols can be recovered using the corresponding ${\rm Type~I}$ constraints (cf.~\eqref{eq:type1}). This requirement indeed holds as for a tuple $(x_1,\ldots, x_t) \in [0:q-1]^t$, the coefficients associated with its ${\rm Type~I}$ constraints are the elements of the following $(n - d) \times n$ sub-matrix of $H_{MDS}$. 
\begin{align}
H^{\rm I}_{MDS} = \left( \begin{array}{cccc} 
\frac{1}{a_{1} - b_1}&\frac{1}{a_{1} - b_2}&\cdots& \frac{1}{a_{1} - b_n} \\
\frac{1}{a_{2} - b_1}&\frac{1}{a_{2} - b_2}&\cdots& \frac{1}{a_{2} - b_n} \\
\vdots & \vdots &\ddots& \vdots \\
\frac{1}{a_{{n - d}} - b_1} &\frac{1}{a_{{n-d}} - b_2}&\cdots& \frac{1}{a_{{n - d}} - b_n} \end{array} \right). \nonumber %\in \F_Q^{(n - d) \times n},
\end{align}
Since any $(n - d) \times (n-d)$ sub-matrix of $H^{\rm I}_{MDS}$ is full-rank, given any subset of $d$ code symbols out of $n$ code symbols  $\{c((x_1, x_2,\ldots, x_t);(i,\theta))\}_{(i, \theta) \in \mathcal{N}}$, the remaining  $n-d$ code symbols can indeed be recovered.

\begin{remark}
\label{rem:diff2}
Note that each tuple $(x_1,\ldots, x_t) \in [0:q-1]^t$ has only $1$ associated {\rm Type~I} constraint in \cite{SAK15}. Therefore the requirement on  $H^{\rm I}_{MDS}$ reduces to having all of its elements non-zero. On the other hand for $d \neq n - 1$ case, we have an additional requirement that any $(n - d) \times (n-d)$ sub-matrix of $H^{\rm I}_{MDS}$ is full-rank.
\end{remark}

In order to show that for a suitable choice for the value of the indeterminate $\rho$ the code $\Cc$ defined by the matrix $\Hm$ gives an MDS code, we utilize the following standard result.

\begin{proposition}
\label{prop:mds}
Let $\Cc \in \F_{Q^{n\alpha}}$ be a linear vector code (over $\F_Q$) defined by the block parity check matrix $\Hm \in \F^{(n - k)\alpha \times n\alpha}_Q$. The code $\Cc$ is an MDS code iff for every $\Sc \subset [n]$ such that $|\Sc| = n - k$, the $(n - k)\alpha \times (n - k)\alpha$ sub-matrix $\Hm(:, \Sc)$ associated with the vector symbols indexed by the set $\Sc$ is full rank.
\end{proposition}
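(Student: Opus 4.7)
The plan is to establish the equivalence by the standard erasure-correction argument, lifted to the vector setting where each coordinate is an $\alpha$-block. The key observation I would begin with is a one-to-one correspondence: for any $\Sc \subseteq [n]$ with $|\Sc| = n - k$, a nonzero codeword $\cv \in \Cc$ with $\cv_j = \zerov$ for all $j \notin \Sc$ exists if and only if the square matrix $\Hm(:, \Sc)$ has nontrivial kernel, since in this case $\Hm \cv^T = 0$ reduces to $\Hm(:, \Sc)(\cv_\Sc)^T = 0$.

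For the forward direction, I would assume $\Cc$ is MDS. By the `any $k$ out of $n$' characterization, every nonzero codeword has at least $n - k + 1$ nonzero vector symbols, so no nonzero codeword can be supported entirely on a set of size $n - k$. The observation above then forces $\Hm(:, \Sc)$ to have trivial kernel for every such $\Sc$ of size $n-k$, hence full rank (since it is square of size $(n-k)\alpha \times (n-k)\alpha$).

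For the converse, I would derive two facts from the full-rank hypothesis on every $\Hm(:, \Sc)$ with $|\Sc| = n-k$. First, $\Hm$ itself has full row rank $(n - k)\alpha$, so $\dim \Cc = k\alpha$ and $|\Cc| = Q^{k\alpha}$. Second, any nonzero codeword with vector Hamming weight at most $n - k$ would be supported on some size-$(n - k)$ set $\Sc$, contradicting the trivial kernel of $\Hm(:, \Sc)$. Therefore the minimum vector Hamming distance of $\Cc$ is at least $n - k + 1$. Combined with $|\Cc| = Q^{k\alpha}$, this yields that any $k$ vector coordinates uniquely determine a codeword, i.e., the MDS property.

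The argument is essentially a block-wise version of the standard scalar MDS characterization, and I do not expect any substantive obstacle: one simply has to be careful to work throughout with vector Hamming weight (counting nonzero $\alpha$-blocks) and with the $(n - k)\alpha \times \alpha$ column-blocks of $\Hm$, rather than with individual field entries and single columns.
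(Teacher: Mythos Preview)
Your proof is correct. The paper does not actually prove Proposition~\ref{prop:mds}; it simply cites it as a ``standard result'' and moves on to Theorem~\ref{thm:mds}. The argument you give is precisely the standard one the paper alludes to, so there is nothing to compare.
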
 

\begin{theorem}
\label{thm:mds}
Let $\F_Q$ be a finite field of large enough size. Then, there exists a choice for the indeterminate $\rho$ such that the $[n, k\alpha, d_{\min}, \alpha]_Q$ vector code defined by the matrix $\Hm = H_{MDS}\otimes \Id_{\alpha} + \Em^{\rho}$ (cf.~\eqref{eq:HJM1}) is an MDS vector code, i.e., $d_{\min} = n - k + 1$.
\end{theorem}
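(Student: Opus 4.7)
\textbf{Proof plan for Theorem~\ref{thm:mds}.}
The plan is to invoke Proposition~\ref{prop:mds} and treat $\rho$ as a formal indeterminate. For each subset $\Sc\subset[n]$ with $|\Sc|=n-k$, the entries of the square matrix $\Hm(:,\Sc)\in\F_Q[\rho]^{(n-k)\alpha\times(n-k)\alpha}$ are linear in $\rho$ (every entry of $\Jm$ is a constant from $\F_Q$, and every entry of $\Em^{\rho}$ is either $0$ or $\rho$). Hence $f_{\Sc}(\rho)\eqdef\det\Hm(:,\Sc)$ is a polynomial in $\rho$ of degree at most $(n-k)\alpha$. If I can exhibit, for every such $\Sc$, a single value of $\rho$ at which $f_{\Sc}(\rho)\neq 0$, then each $f_{\Sc}$ is a nonzero polynomial, the product $\prod_{\Sc}f_{\Sc}(\rho)$ is a nonzero polynomial of degree at most $\binom{n}{n-k}(n-k)\alpha$, and for $|\F_Q|$ strictly larger than this degree there must exist $\rho^{\ast}\in\F_Q$ for which no factor vanishes. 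By Proposition~\ref{prop:mds}, the code defined by $\Hm=H_{MDS}\otimes \Id_{\alpha}+\Em^{\rho^{\ast}}$ is then MDS.

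The natural evaluation point is $\rho=0$, since there $\Em^{\rho}$ vanishes and $\Hm$ collapses to a Kronecker product with a tractable structure. At $\rho=0$, the sub-matrix $\Hm(:,\Sc)$ equals $H_{MDS}(:,\Sc)\otimes \Id_{\alpha}$, where $H_{MDS}(:,\Sc)$ is an $(n-k)\times(n-k)$ submatrix of the Cauchy matrix $H_{MDS}$. A standard property of Cauchy matrices (using distinctness of the $2n-k$ prescribed field elements) is that every square submatrix is invertible, so $\det H_{MDS}(:,\Sc)\neq 0$. Using the Kronecker-product identity $\det(A\otimes B)=(\det A)^{m}(\det B)^{\ell}$ for $A\in\F_Q^{\ell\times\ell}$ and $B\in\F_Q^{m\times m}$, I obtain
\begin{align}
f_{\Sc}(0)=\det\bigl(H_{MDS}(:,\Sc)\otimes \Id_{\alpha}\bigr)=\bigl(\det H_{MDS}(:,\Sc)\bigr)^{\alpha}\neq 0. \nonumber
\end{align}
This establishes that every $f_{\Sc}$ is a nonzero polynomial in $\rho$, which is exactly what the Schwartz-Zippel-style argument above needs.

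I expect the main subtlety to be bookkeeping rather than any deep obstacle. The argument above hinges on the parity-check matrix being \emph{affine} in the single indeterminate $\rho$; this works because the construction deliberately uses a common $\rho$ for every nonzero entry of $\Em$, and the diagonal blocks of $\Jm$ are built from a fixed Cauchy matrix tensored with $\Id_{\alpha}$. Note also that the requirement placed on the ${\rm Type~I}$ coefficients just after \eqref{eq:type1}, and emphasized in Remark~\ref{rem:diff2}, is automatically met by this choice of $\Jm$: the coefficients attached to any fixed tuple $(x_1,\ldots,x_t)$ form the matrix $H^{\rm I}_{MDS}$ of the first $n-d$ rows of $H_{MDS}$, every $(n-d)\times(n-d)$ submatrix of which is invertible since $H_{MDS}$ is Cauchy. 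Finally, by choosing $\F_Q$ with $|\F_Q|>2n-k+\binom{n}{n-k}(n-k)\alpha$, both the distinctness assumption used to define $H_{MDS}$ and the existence of a nonroot $\rho^{\ast}$ of $\prod_{\Sc}f_{\Sc}(\rho)$ are simultaneously achievable, completing the proof.
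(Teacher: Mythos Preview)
Your proposal is correct and follows essentially the same route as the paper: evaluate $f_{\Sc}(\rho)=\det\Hm(:,\Sc)$ at $\rho=0$ to reduce to $\det(\Jm(:,\Sc))$, use that $\Jm=H_{MDS}\otimes\Id_\alpha$ with $H_{MDS}$ Cauchy to conclude each $f_{\Sc}$ is a nonzero polynomial, and then take $Q$ large enough to find a common nonroot of the product $\prod_{\Sc}f_{\Sc}$. The only differences are cosmetic: you spell out the Kronecker-determinant identity and an explicit field-size bound, whereas the paper simply asserts $\det(\Jm(:,\Sc))\neq 0$ and ``$Q$ large enough.''
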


\begin{proof}
Let $\Sc \subseteq [n]$ be a set such that $|\Sc| = n - k$. We consider the determinant of the matrix $\Hm(:, \Sc)$. Note that $\det(\Hm(:,\Sc))$ is a polynomial of the indeterminate $\rho$. Let's denote the polynomial by $f_{\Sc}(\rho)$. We have, 
\begin{align}
f_{\Sc}(\rho = 0) = \det(\Jm(:,\Sc) + \Em^{\rho = 0}(:,\Sc)) = \det(\Jm(:, \Sc)) \neq 0, \nonumber
\end{align}
where the last inequality follows as $\Jm = H_{MDS}\otimes \Id_{\alpha}$ is a parity check matrix of an MDS vector code. This establishes that $f_{\Sc}(\rho)$ is a non-trivial (not identically zero) polynomial of $\rho$. Now consider the polynomial
\begin{align}
h(\rho) = \prod_{\Sc \subseteq [n]:~|\Sc| = n-k}\det(H(:,\Sc)) = \prod_{\Sc \subseteq [n]:~|\Sc| = n-k}f_{\Sc}(\rho). \nonumber
\end{align}
Here, $h(\rho)$ is a non-trivial polynomial in $\rho$ as it is a product of non-trivial polynomials $\big\{f_{\Sc}(\rho)\big\}_{\Sc}$. Furthermore, the degree of $h(\rho)$ is bounded by ${n \choose n - k}(n - k)\alpha$. Therefore, for $Q$ large enough, there exists a value of $\rho$, say $\rho^{\ast}$ such that $h(\rho^{\ast}) \neq 0$. Combining this with Proposition~\ref{prop:mds}, we obtain that the vector code defined by the parity check matrix $\Hm = \Jm + \Em^{\rho^{\ast}}$ is an MDS vector code.
\end{proof}

\section{Conclusion}
\label{sec:conclusion}
For a given rate, we present a construction for MSR codes that allows for bandwidth-efficient repair of a single node failure with arbitrary (but fixed) number of helper nodes $d$. In addition, for the constant rate, the code has a polynomial sub-packetization (a.k.a. node size) $\alpha$. However, in the present form the construction suffers from a large field size $Q$. Note that the requirement on the field size emerges from the requirement that the code should be an MDS code (cf.~Section~\ref{sec:MDSprop}). It is an important question to resolve if the code construction with similar system parameters $n, k, d$ and polynomial sub-packetization can be achieved for a smaller field size. In particular, the lower bound on the field size for an MSR code is investigated in \cite{CM15_ITW}, and the results presented here form an upper bound.

\bibliographystyle{unsrt}
\bibliography{MSR_poly}

\end{document}